\newtheorem*{facts}{Facts}
\tikzstyle{vertex}=[circle,fill=black!0,minimum size=4pt,inner sep=0pt]
\newcommand{\netroot}[1]{r_{#1}}
\title{Locating a Tree in a  Phylogenetic Network\\ in Quadratic Time}
\author{Philippe Gambette\inst{1} \and Andreas D. M.  Gunawan\inst{2} \and Anthony Labarre\inst{1} \and  \\
  St\'ephane Vialette\inst{1} \and Louxin Zhang\inst{2}}
\authorrunning{Philippe Gambette et al.}
\institute{
Universit{\'e} Paris-Est, LIGM (UMR 8049), UPEM, CNRS, ESIEE, ENPC, F-77454, Marne-la-Vall{\'e}e, France\\
\and 
Department of Mathematics, National University of Singapore}
\begin{document}
\maketitle

\begin{abstract}
A fundamental problem in the study of phylogenetic networks is to determine whether or not a given phylogenetic network 
contains a given 
phylogenetic tree.     We  develop a quadratic-time algorithm for this problem for binary nearly-stable phylogenetic networks.
We also show that the number of reticulations in a reticulation visible or nearly stable phylogenetic network  is bounded from above  by 
a function linear in the number of taxa. 
\end{abstract}

\section{Introduction}

Genetic material can be transferred between organisms by hybridization, recombination and horizontal gene transfer besides traditional reproduction. Recent studies in comparative  genomics suggest that these ``lateral''  processes are a driving force in evolution which shapes the genome of a species~\cite{Chan_13_PNAS,McBreen_06_TrendsPS,Treangen_11_PLOSGenentics}.  
Accordingly, phylogenetic networks have commonly been used to model reticulate evolutionary histories of species~\cite{Chan_13_PNAS,Dagan_08_PNAS,Marcussen_SysBiol_12}. A plethora  of methods for reconstructing  reticulate  evolutionary histories of species 
and related algorithmic issues have  extensively been  studied over the past two decades~\cite{Gusfield_14_Book,Huson_Book,Moret_04_TCBB,Nakhleh_13_TREE,Parida_10,Wang_01_JCB}.

A \emph{phylogenetic network} is an acyclic digraph with a set $X$ of  labeled leaves (that is,  vertices of outdegree zero) and a  root (having indegree zero). 
The leaves  
are in one-to-one correspondence with 
a collection  of taxa  under study, whereas the unique root represents their least common ancestor.  Vertices with indegree one   represent speciation events. Vertices of indegree at least two represent an evolutionary process by which genetic material was horizontally transferred from one species to another.

A fundamental question in the study of phylogenetic networks is  to determine whether a tree is \emph{displayed} by a phylogenetic network over the same set of taxa (in a sense we  define precisely below). This problem is called the \emph{tree containment problem}~\cite{Huson_Book}. Answering this question is indeed useful to validate and justify a  phylogenetic network model  by testing whether it displays existing phylogenies over a set of taxa under study. 

The problem  is \NP-complete in general~\cite{Kanj_08_TCS},
 even on the more restricted class of \emph{tree-sibling time-consistent regular networks}~\cite{van_Iersel_2010_IPL}. Although great effort has been devoted to the study of that problem,  it has been shown to be polynomial-time solvable only for a couple of interesting classes of phylogenetic networks,  namely, \emph{normal} networks and \emph{tree-child} networks~\cite{van_Iersel_2010_IPL}. 
Determining the complexity of the tree containment problem for a class of phylogenetic networks that properly contains 
tree-child 
networks,  particularly  those with the so-called \emph{reticulation-visibility property}, is an open problem~\cite{Huson_Book,van_Iersel_2010_IPL}. 

In this paper, we study the tree containment problem for \emph{nearly stable} phylogenetic networks (defined in the next section), which 
generalize  
normal and 
tree-child 
networks. Recombination histories of viruses, hybridization histories of plants, and histories of horizontal gene transfers reported in literature often 
satisfy the property that defines those networks~\cite{Jenkins_PlosOne_12,Marcussen_SysBiol_12}.   Our key results include: (i) the number of reticulations in a reticulation-visible or nearly stable phylogenetic network  is linearly bounded from above in terms of the number of taxa; and (ii) the tree containment problem for nearly stable phylogenetic networks can be solved in  quadratic time. Omitted proofs and details will appear in the extended version.

\section{Concepts and Notions}


A {\it (phylogenetic) network} on a set $X$ of taxa 
is a directed acyclic graph with a single root (a vertex with indegree 0) which satisfies the following properties: (i)  its leaves (vertices with outdegree 0) are in one-to-one correspondence with the taxa in $X$; (ii) there are no vertices with both indegree one and outdegree one; and (iii) there is a path from the root to any other vertex.  We identify each leaf with the taxon corresponding to it and refer to the directed edges (tail, head) as \emph{branches}.

In a network, 
\emph{reticulation vertices} (or simply \emph{reticulations}) are vertices  with indegree at least two and outdegree one; {\it tree vertices} are vertices with indegree one and outdegree at least two. A branch is a {\it tree branch} if it ends at a tree vertex; it is called a {\it reticulation branch} otherwise. 

A network  is {\it binary}  if its root, leaves and the other vertices have degree 2, 1 and 3, respectively.   
A \emph{phylogenetic tree} is simply  a binary  network without reticulations.

For a binary  network $N$, we shall use $\netroot{N}$ to denote the root of  $N$. 
Let $x$ and $y$ be vertices in $N$. We say that $x$ is a {\it parent}  of $y$ and $y$ is a {\it child}  of $x$ if $(x, y)$ is a branch.
More generally, we say that  $x$ is an {\it ancestor} of $y$ and equivalently $y$ is a {\it descendant} of $x$ if there is a directed path from $x$ to $y$.
 A vertex $x$ in $N$ is a \emph{stable ancestor} of a vertex $v$
if it belongs to all directed paths from $\netroot{N}$ to $v$.
We say that $x$ is \emph{stable} if there exists a leaf $\ell$
such that $x$ is a stable ancestor of $\ell$.

\begin{proposition} 
\label{three_facts}
Let $N$ be a binary network.  The following facts hold.\vspace{-0.5em}
\begin{itemize}
\item[{\rm (1)}] 
A vertex  is stable if it has a stable tree child.
 \item[{\rm (2)}] 
A reticulation is stable if and only if  its unique child is a stable tree vertex.
\item[{\rm (3)}] 
If a tree vertex is stable, then its children cannot both be reticulations. 
\end{itemize}
\end{proposition}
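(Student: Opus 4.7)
The plan is to unfold the definition of stability---every directed path from the root to the witnessing leaf must pass through the vertex in question---and exploit the indegree/outdegree constraints on tree vertices and reticulations in a binary network.

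For (1), suppose $y$ is a tree child of $x$ and $y$ is a stable ancestor of some leaf $\ell$. Since $y$ is a tree vertex of a binary network, $y$ has indegree exactly one, so $x$ is the unique parent of $y$; hence every directed path from the root to $y$ ends with the edge $(x,y)$, and in particular every root-to-$\ell$ path (which by stability passes through $y$) also passes through $x$. Thus $x$ is a stable ancestor of $\ell$, so $x$ is stable.

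For (2), the ``if'' direction is immediate from (1) applied with $x=r$ and $y=c$. For the ``only if'' direction, suppose $r$ is stable for some leaf $\ell$. Since $r$ is a reticulation, it is internal with a unique child $c$, so every root-to-$\ell$ path leaves $r$ along the edge $(r,c)$ and therefore passes through $c$; thus $c$ is stable for $\ell$. To see that $c$ is a tree vertex, suppose by contradiction that $c$ has a second parent $r'\neq r$. If $r'$ is not a descendant of $r$, a root-to-$r'$ path avoiding $r$ extends through $(r',c)$ and any $c$-to-$\ell$ path into a root-to-$\ell$ path avoiding $r$, contradicting the stability of $r$ at $\ell$. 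If $r'$ is a descendant of $r$, then since $c$ is the only child of $r$, $r'$ is a descendant of $c$, and the edge $(r',c)$ closes a directed cycle, contradicting acyclicity. Hence $c$ has indegree one and so is a tree vertex.

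For (3), I argue by contradiction: assume both children $c_1$ and $c_2$ of a stable tree vertex $v$ are reticulations, and let $\ell$ be a leaf for which $v$ is stable, so $\ell$ is a descendant of $c_1$ or $c_2$, say WLOG of $c_1$. Let $p_i$ denote the parent of $c_i$ distinct from $v$ for $i=1,2$; I aim to exhibit a root-to-$\ell$ path avoiding $v$. If $p_1$ is not a descendant of $v$, a root-to-$p_1$ path avoiding $v$, concatenated with $(p_1,c_1)$ and any $c_1$-to-$\ell$ path, yields such a path immediately. So $p_1$ must be a descendant of $v$; acyclicity forbids $p_1$ from being a descendant of $c_1$ (the edge $(p_1,c_1)$ would close a cycle), hence $p_1$ is a descendant of $c_2$. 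Turning to $p_2$: it cannot be a descendant of $c_2$ (cycle via $(p_2,c_2)$), nor a descendant of $c_1$ (else $c_1\to\cdots\to p_2\to c_2\to\cdots\to p_1\to c_1$ is a cycle), so $p_2$ is not a descendant of $v$; a root-to-$p_2$ path avoiding $v$, extended by $p_2\to c_2\to\cdots\to p_1\to c_1\to\cdots\to\ell$, is the desired path, contradicting the stability of $v$ at $\ell$. The main obstacle is the acyclicity bookkeeping here: after the direct attempt through $p_1$ is blocked by forcing $p_1$ below $c_2$, one has to notice that the symmetric attempt through $p_2$ cannot be blocked the same way---the descent $c_2\to\cdots\to p_1\to c_1$ already in place expels $p_2$ from the subtree rooted at $v$ and revives an avoiding path through the $c_2$ side.
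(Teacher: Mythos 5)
The paper states this proposition without proof (omitted proofs are deferred to an extended version), so there is no in-paper argument to compare against; your proof is the natural unfolding of the definitions, and parts (1) and (3) are correct as written --- in particular the acyclicity bookkeeping in (3), which forces $p_1$ below $c_2$ and then expels $p_2$ from the set of descendants of $v$, is exactly what is needed. The one genuine weak spot is the final step of (2): from ``$c$ has indegree one'' you conclude ``$c$ is a tree vertex'', but under the paper's own definitions a leaf also has indegree one and is \emph{not} a tree vertex (tree vertices must have outdegree at least two). Your case analysis only rules out $c$ being a reticulation; it leaves open the possibility that $c$ is a leaf, and in that case the reticulation $r$ \emph{is} stable (every root-to-leaf path must enter the leaf through its unique parent $r$) while its child is not a tree vertex, so the ``only if'' direction as literally stated fails. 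This is arguably a defect of the statement rather than of your argument --- the paper itself tacitly relies on the fact that a reticulation with a leaf child is stable when it invokes (2) in \Cref{lemma:bin-NS-to-bin-RV} --- but as written the inference ``indegree one, hence tree vertex'' is a non sequitur: you should either state explicitly that leaves are being treated as (trivially stable) tree children/vertices in (1) and (2), or dispatch the leaf case separately.
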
 

A network is a {\it tree-child} network if every vertex has a  child that is a tree vertex~\cite{Cardona_09_TCBB}. It can be proved that a network is a tree-child network if and only if every vertex is stable.  
It is \emph{reticulation-visible} 
if all its reticulations are stable~\cite{Huson_Book}.  It is {\it nearly stable} if for every vertex, either that vertex is stable or its parents are.  

\emph{Contracting} a branch $(u, v)$ means replacing it with a single vertex $w$ in such a way that all neighbors of $u$ and $v$ become neighbors of $w$.  Given a binary phylogenetic tree $T$ and a binary network $N$, we say that $N$ \emph{displays} $T$ if there is a spanning subtree $T'$ of $N$ that is a \emph{subdivision} of $T$,
i.e. $T'$ has the same vertex set as $N$ and $T$ can be obtained from $T'$ 
by contracting all branches in $T'$ incident with the vertices with outdegree 1 and indegree 1, all branches incident with the ``dummy leaves'' (leaves in $T'$ that correspond to tree vertices in $N$), and all branches incident with a vertex of indegree 0 and outdegree 1.
\Cref{fig:example-tc}  shows an example of  a phylogenetic network $N$ and a tree that is displayed in $N$.

\begin{figure}[htbp]
\centering

	\includegraphics[width=0.7\textwidth]{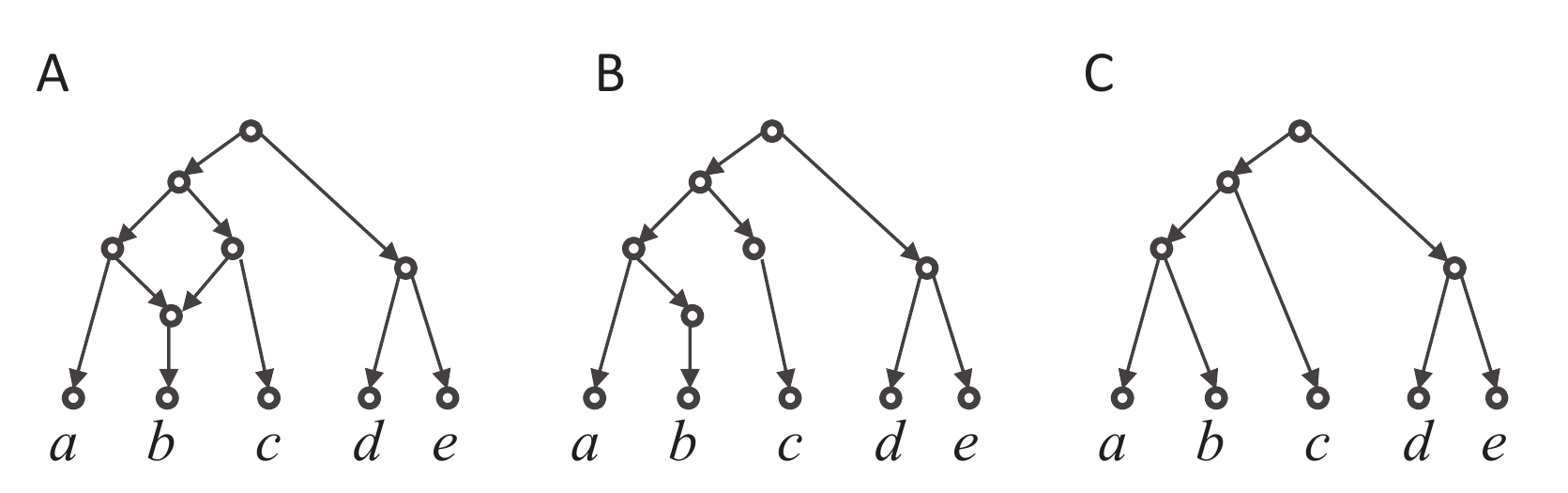}
\caption{ (A) A phylogenetic network. (B) A spanning subtree of $N$ obtained after the reticulation branch between the parents of $c$ and $b$ is removed.  (C) A tree displayed in $N$ through the subtree in (B).}
\label{fig:example-tc}
\end{figure}

 In this work, we study the {\it tree containment problem} (TCP), which is that of determining whether a phylogenetic tree is displayed by a network or not.   

\section{How Many Reticulations in a Network?}

 An arbitrary network with $n$ leaves can have a very large number of reticulations.  To analyze the time complexity of an algorithm 
designed for solving a network problem, we need to bound the size of the network by a function of $n$.  

Removing a reticulation branch from each reticulation in a binary network $N$
yields a spanning subtree $T'$.  All leaves in $N$ are 
still  leaves in $T'$, but  $T'$ may 
additionally 
contain some ``dummy leaves"  that correspond to tree vertices whose outgoing branches have both been removed.  The following lemma says that 
it is always possible to remove proper reticulation branches so as to obtain a tree without dummy leaves.

\begin{lemma}
\label{lemma_no_dead_end}
 Let $N$ be a binary reticulation-visible phylogenetic network. 
We can determine  which reticulation branch to remove at each reticulation  so that the  tree obtained after 
removing 
the selected branches 
contains no 
dummy leaves.
\end{lemma}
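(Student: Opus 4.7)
My plan is to reduce the selection problem to a purely local constraint satisfaction problem on a graph whose components are paths and cycles, and then give an explicit assignment on each component.

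The first step is to pin down where dummy leaves can appear. Since only reticulation branches are deleted, a vertex becomes a dummy leaf only if it ends up with outdegree $0$ in $T'$. For a reticulation $r$ this can happen only if its unique child $c$ is itself a reticulation (so that the branch $(r,c)$ could be selected for removal at $c$); but Proposition~\ref{three_facts}(2) then implies that $r$ is \emph{not} stable, contradicting reticulation-visibility. Hence every reticulation's child is a tree vertex (or a leaf), and dummy leaves can only arise at a tree vertex $v$ both of whose children are reticulations; call such a $v$ a \emph{bad tree vertex}. The vertex $v$ becomes a dummy leaf precisely when, at each of its two reticulation children, the parent kept is not $v$.

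Next I would build an auxiliary multigraph $H$ whose vertices are the reticulations of $N$ and which has one edge, labelled $v$, joining $r_1$ and $r_2$ for every bad tree vertex $v$ with reticulation children $r_1,r_2$. Choosing, at each reticulation, which of its two incoming branches to keep corresponds to orienting each incident edge of $H$ toward one endpoint; the required condition is that every edge $\{r_1,r_2\}$ labelled $v$ be ``covered'' in the sense that at least one of $r_1, r_2$ keeps $v$ as parent. Because (by step one) each parent of a reticulation is a tree vertex, a reticulation $r$ is incident in $H$ with at most two edges, one for each of its (at most two) bad-tree-vertex parents. Therefore $H$ has maximum degree at most $2$ and each connected component is a path or a cycle.

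It remains to exhibit a valid choice on each component. On an isolated vertex any choice works. On a path $r_0\text{--}v_0\text{--}r_1\text{--}v_1\text{--}\cdots\text{--}v_{k-1}\text{--}r_k$, let $r_i$ keep $v_i$ for $0\le i\le k-1$ and let $r_k$ keep either of its parents; every $v_i$ is then covered by $r_i$. On a cycle $r_0\text{--}v_0\text{--}r_1\text{--}\cdots\text{--}r_{k-1}\text{--}v_{k-1}\text{--}r_0$ (where, necessarily, both parents of each $r_i$ are $v_{i-1}$ and $v_i$), let $r_i$ keep $v_i$; again every $v_i$ is covered. Since all bad tree vertices are covered, no dummy leaf is created, and the construction is clearly effective: compute $H$, decompose into components, and apply the above rule.

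The one step I expect to need genuine care is the cycle case, because there a reticulation has \emph{no} parent outside the set of bad tree vertices in the cycle, so we have no ``free'' choice to fall back on. What saves us is the perfect regularity forced by maximum degree $2$: with $k$ reticulations and $k$ bad tree vertices arranged alternately, the matching $r_i\mapsto v_i$ is simultaneously a valid parent choice for every $r_i$ and a cover for every $v_i$. Everything else is bookkeeping, and the argument is constructive and runs in linear time in the size of $N$.
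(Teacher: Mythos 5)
Your proof is correct, and it establishes the same combinatorial fact as the paper by a genuinely different, more explicit route. The paper phrases the requirement as: the removed branches must form a matching saturating the reticulations in the bipartite graph whose two sides are the tree vertices and the reticulations and whose edges are the reticulation branches; since every reticulation has degree exactly $2$ there and every tree vertex has degree at most $2$, Hall's condition is satisfied (the paper cites Alon's degree form of the theorem), so such a matching exists. You instead isolate the only places where a dummy leaf can arise (tree vertices with two reticulation children), encode the conflicts in an auxiliary multigraph of maximum degree $2$, and resolve each path or cycle component by an explicit choice of kept parent. Both arguments rest on the same degree bounds, which in turn rest on the same consequence of \Cref{three_facts}(2) that you prove first (no reticulation has a reticulation parent or child in a reticulation-visible network); yours trades the appeal to Hall's theorem for a constructive, visibly linear-time procedure, which fits well with the algorithmic use the paper later makes of this lemma. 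One small caveat, shared with the paper's own proof: under the paper's degree conventions the root is not a tree vertex, yet it can have two reticulation children and would then also need to retain an outgoing branch; admitting the root as a possible ``bad'' parent changes nothing in either argument, but should be said.
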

\begin{proof}  
Let $T$ be a tree obtained from $N$ 
by removing exactly one reticulation branch incident to each reticulation.
In order for $T$ not to contain any dummy leaves, we need to guarantee that
the reticulation branches to be removed are incident with different tree vertices. In other words, the branches to be removed form a matching that covers every reticulation  in $N$. 
Since $N$ has the reticulation-visibility property,  the  parents of each reticulation  are both  tree vertices (\Cref{three_facts}). 
Such a set of reticulation branches exists and can be found by applying Hall's Theorem to a bipartite graph 
with tree vertices and reticulations as vertex sets and reticulation branches as edges. 
Since each reticulation  is the head of two reticulation branches and each tree vertex is the tail of at most two reticulation branches, there exists a matching that covers all the reticulations (see a result of N. Alon on page 429 in \cite{Bondy_Book}).  \qed
\end{proof}

\begin{figure}	
\begin{center}
	\includegraphics[width=0.9\textwidth]{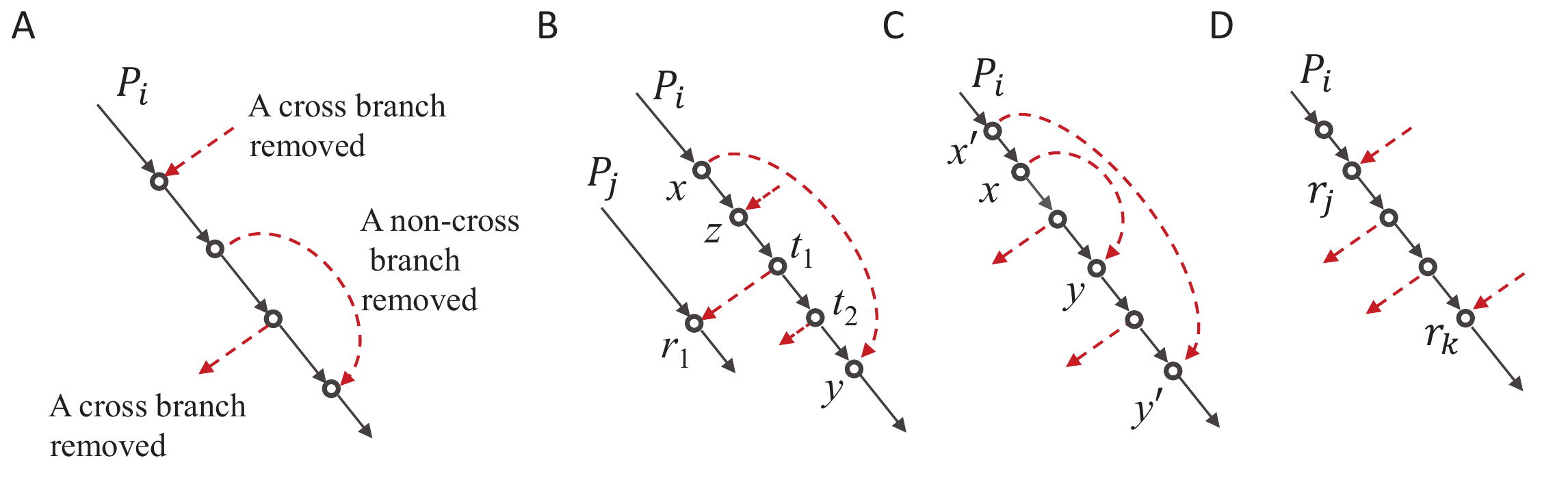}
\end{center}
\caption{Illustration of the different cases in the proof of \Cref{linear_bound}.
{\bf A}. Definition of cross and non-cross branches removed from a path. 
{\bf B}. The branch $(x, y)$ is a non-cross branch removed from a path. 
Assume that a cross branch $(z', z)$ has been removed from a reticulation $z$ inside the segment from $x$ and $y$, where $z'$ is not shown,  
and two cross branches have also been removed from two tree vertices $t_1$ and $t_2$  between $z$ and $y$. 
{\bf C}.  Some cross branches must have been removed from their tails located between the heads of  two non-cross branches  that are removed from a path (in this case, between $y$ and $y'$).
{\bf D}. If  two cross branches have been removed from two reticulations in a path, then the upper reticulation ($r_j$ here)  is not stable.
\label{linear_bound_figure}}
\end{figure}

\begin{theorem}
\label{linear_bound}
Let $N$ be a binary reticulation-visible phylogenetic network with $n$ leaves. Then $N$ has at most $4(n-1)$ reticulations. 
\end{theorem}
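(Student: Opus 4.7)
The plan is to use \Cref{lemma_no_dead_end} to extract a spanning tree of $N$ whose leaves coincide with those of $N$, and then to show that the reticulation-visibility property limits the number of reticulations on each \emph{chain} of that spanning tree. First, I would apply \Cref{lemma_no_dead_end} to choose and delete one reticulation branch at every reticulation of $N$, obtaining a spanning subtree $T$ of $N$ whose leaf set is exactly the $n$ leaves of $N$. In $T$, every reticulation of $N$ becomes a degree-2 vertex (indegree and outdegree both one), and so does every tree vertex of $N$ whose non-selected outgoing branch was removed. Contracting all these degree-2 vertices yields a binary tree $T^{\ast}$ with the same $n$ leaves, and hence with exactly $2(n-1)$ edges. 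Each edge of $T^{\ast}$ corresponds to a \emph{chain}, i.e., a maximal path of degree-2 vertices in $T$, and every reticulation of $N$ sits on some such chain.

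The crux is to show that each chain carries at most two reticulations of $N$; granted this, summing over the $2(n-1)$ edges of $T^{\ast}$ immediately gives $r \le 2 \cdot 2(n-1) = 4(n-1)$. To bound the reticulations on a fixed chain $C$, I would extend $C$ to any root-to-leaf path $P$ of $T$ and, for each removed branch $(x,y)$ with $y$ a reticulation on $P$, call it \emph{non-cross} if $x$ also lies on $P$---so that acyclicity forces $x$ to be an ancestor of $y$---and \emph{cross} otherwise, as in \Cref{linear_bound_figure}A. I would then prove separately that at most one cross-type reticulation and at most one non-cross-type reticulation of $N$ lies on $C$.

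The cross bound (\Cref{linear_bound_figure}D) is a rerouting argument: if $y$ and $y'$ were two cross-type reticulations on $C$ with $y$ above $y'$, the alternate incoming edge of $y'$ would come from off $P$, yielding a root-to-leaf path through $y'$ that avoids $y$ and hence contradicts the stability of $y$. The non-cross bound (\Cref{linear_bound_figure}B--C) is more delicate: two non-cross-type reticulations $y, y'$ on $C$ with $y$ above $y'$ would force every tree vertex on $P$ strictly between them to lose its non-$P$ outgoing branch, so that its second child is a reticulation off $P$; iterating \Cref{three_facts}(2)--(3) along the subtrees hanging off $P$ in that segment then produces a reticulation whose only child is itself a reticulation, an unstable configuration that contradicts reticulation-visibility. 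The main obstacle is this non-cross case: the side-subtree analysis requires a careful inductive propagation of instability down the chains of removed off-$P$ branches, but once it is established, combining the two per-chain bounds yields $r \le 4(n-1)$ as claimed.
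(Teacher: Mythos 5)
Your setup --- extracting a dummy-leaf-free spanning tree via \Cref{lemma_no_dead_end}, contracting degree-2 vertices to a binary tree with $2(n-1)$ edges, and splitting the removed reticulation branches into cross and non-cross ones --- matches the paper's. The gap is your central claim that each chain carries at most two reticulations, and in particular the sub-claim that at most one non-cross reticulation lies on any chain. That sub-claim is false. Consider a chain $r\to v_1\to v_2\to v_3\to v_4\to v_5\to v_6\to b$ in which $v_1,v_2,v_4,v_5$ are tree vertices, $v_3$ is a reticulation with parents $v_1$ and $v_2$, $v_6$ is a reticulation with parents $v_4$ and $v_5$, $b$ is a leaf, and the second children of $v_2$ and $v_5$ are reticulations $z$ and $c$ lying on other chains, each with a leaf child and a second parent elsewhere (one checks easily that this extends to a binary network, e.g.\ with $4$ leaves). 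Every root-to-$b$ path funnels through $v_1,\dots,v_6$, so $v_3$ and $v_6$ are stable on $b$; $z$ and $c$ are stable on their leaf children; hence the network is reticulation-visible. Removing $(v_1,v_3)$ and $(v_4,v_6)$ (both non-cross) together with the cross branches $(v_2,z)$ and $(v_5,c)$ leaves exactly this chain in $T$, with \emph{two} non-cross reticulations on it; iterating the pattern puts arbitrarily many on a single chain. Your proposed contradiction never materializes: the vertex forced to have two reticulation children ($v_5$ here) is a \emph{tree} vertex, and reticulation-visibility imposes no stability requirement on tree vertices, so no reticulation with a reticulation-only child is ever produced.

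The correct accounting --- and what the paper actually does --- is global rather than per-chain. Each non-cross branch $(x,y)$ forces a cross branch to have its \emph{tail} (not its head) inside the segment $P_i[x,y]$, and consecutive non-cross heads on the same chain are separated by further cross-branch tails; this yields an injection from non-cross branches to cross branches, so their number is at most the number of cross branches. Separately, at most one cross \emph{head} can lie on each chain (two would render the upper one unstable), bounding the cross branches by $2n-2$ and hence the total by $4(n-1)$. Your argument for the cross bound is essentially this second step (though you should still justify that the off-path tail feeding the lower cross head is not a descendant of the upper one, which follows from acyclicity and the chain structure of $T$); the non-cross half of your plan, however, cannot be repaired into a per-chain bound and must be replaced by the charging argument above.
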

\begin{proof} 
Assume $N$ contains $m$ reticulations.
By \Cref{lemma_no_dead_end}, we can obtain a tree $T$ without dummy leaves by removing $m$
reticulation branches from $N$.  Since $N$ is  binary, an internal vertex in $T$ has either one or 
two children; equivalently, $T$ is a subdivision of a rooted binary tree $T'$ over the same leaves as $N$.  
Therefore, $T'$ has $n-1$ internal vertices (including its root) of outdegree 2 and there are $2n-2$ paths $P_i$ ($1\leq i\leq 2n-2$)
satisfying (i) the ends of each $P_i$ are either the root of $T$, a leaf, or internal vertices of outdegree 2,
and (ii) each internal
vertex  of  $P_i$ has both indegree and outdegree 1 if $P_i$ consists of two or more branches. 

 For each path $P_i$ of  length $\geq 2$, an internal vertex 
of $P_i$ is either a tree vertex of $N$, whose outgoing branch not in $P_i$  has been removed,  or a reticulation, whose incoming branch not in $P_i$ has been removed. For convenience of discussion, we 
 divide the removed reticulation branches into {\bf cross} and {\bf non-cross}  branches (with respect to $T$)
(\Cref{linear_bound_figure}A).  A removed branch is called a  \emph{cross branch} if its tail and head are located on two different paths $P_i$ and $P_j$, $i\neq j$,
otherwise it's called a \emph{non-cross branch}.   We first have the following facts.

\parbox{4.6in}{
\begin{facts}\label{thmfacts}
\ \vspace{-2em}\\
\begin{enumerate}[label=(\arabic*)]
\item[{\rm (1)}] 
If  $(x, y)$  is a non-cross branch removed from $P_i$, 
then at least one cross branch has been removed from its tree vertex tail in the segment $P_i[x, y]$ from $x$ to $y$ of $P_i$, 
and there is no reticulation in $P_i[x, y]$ other than $y$.

\item[{\rm (2)}] Let $(x, y)$ and $(x', y')$ be two non-cross branches removed from  $P_i$, where $y$ is an ancestor of $y'$. Then there exists at  least one cross branch being removed from its tree vertex tail located between $y$ and $y'$ {(\rm \Cref{linear_bound_figure}C)}.

\item[{\rm (3)}] There are at least as many cross reticulation branches removed as non-cross reticulation branches. 
\end{enumerate}
%
%
\end{facts}
\begin{proof} 
(1)  
Since $N$ contains no parallel branches, 
$P_i[x, y]$ has at least three vertices,  
so it suffices to prove that  $y$ is the only reticulation in $P_i[x, y]$.  

Assume on the contrary that 
 a branch $(z', z)$ has been removed from a reticulation $z$ in $P_i[x, y]$ (\Cref{linear_bound_figure}B).
Then there is a
path including $(x, y)$ from 
$\netroot{N}$ 
to a leaf below $y$
that avoids 
$z$, so $z$ is not stable on any leaf below $y$ (and hence below $z$) 
 in $T$ (and hence in $N$). %
Moreover, since $T$ is a subtree of $N$, $z$ cannot  be stable in $N$ on any leaf that is not  below $z$ in $T$. $N$ and $T$ have  the same leaf set, hence
 $z$ is not stable in $N$, contradicting  the reticulation-visibility property.

(2) Note that $y$ and $y'$ are  reticulations in $N$. By Fact (1) above, $y$ must be above $x'$, and there is a cross brach removed from its tree vertex tail located between $x'$ and $y'$.

(3)  By Facts (1) and (2), we can establish an injective map from the set of non-cross reticulation branches to that of cross ones. 
Hence, the statement in this part is also true.   \qed
\end{proof}
}

\noindent Assume  at least $2n-1$  cross branches $(t_i, r_i)$ have been removed from the $2n-2$ paths 
$P_i$.  At least two heads $r_j$ and $r_k$ are on the same path $P_i$  (\Cref{linear_bound_figure}D). Using an argument 
similar to that used in the proof of { Fact} (2),  one of  $r_j$ and $r_k$ which is upstream in $P_i$ is not stable,  a contradiction.   Therefore, at most $2n-2$ cross branches have been removed 
to produce $T$. By { Fact} (3), there are also at most $2n-2$ non-cross branches removed during the process.  Since we removed 
one incoming branch for each reticulation, we conclude that there are at most $4(n-1)$ reticulations in $N$.
 \qed
 \end{proof}

\begin{lemma}
\label{lemma:bin-NS-to-bin-RV}
 Let $N$ be a binary  nearly stable 
 network, and let $U_{\rm ret} (N)$ (resp. $S_{\rm ret}(N)$) denote the number of all unstable (resp. stable) reticulations in $N$.  We can transform $N$ into a binary reticulation-visible 
 network $N'$ with the property that  $N'$ has the same leaf set as $N$ and  $S_{ret}(N) \leq S_{ret}(N') \leq S_{ret}(N) + U_{ret}(N)$.
\end{lemma}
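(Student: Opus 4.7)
The plan is to transform $N$ into $N'$ by iteratively eliminating unstable reticulations. At each step one picks an unstable reticulation $r$ in the current network, deletes exactly one of its two incoming branches, and suppresses any resulting vertex of indegree 1 and outdegree 1 (first the reticulation $r$ itself, then possibly the tree vertex whose outdegree dropped to 1). Each such operation removes exactly one reticulation, so the process terminates; when it does, the resulting $N'$ contains no unstable reticulation and is therefore reticulation-visible.

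The essential tool is the following monotonicity property of stability: deleting a branch and then suppressing degree-two vertices can never destabilize a surviving vertex. Indeed, any path from the root to a leaf $\ell$ in the smaller network lifts (by uncontracting shortcut edges) to a root-to-$\ell$ path in the original, so every vertex that was on all root-to-$\ell$ paths in the original network is still on all root-to-$\ell$ paths in the smaller one. I would use this property as follows. Whenever $r$ is an unstable reticulation of the current network, monotonicity forces $r$ to have been unstable already in $N$; hence by the nearly-stable hypothesis both parents $p_1, p_2$ of $r$ in $N$ are stable tree vertices, and by \Cref{three_facts}(3) the other child of each such parent is a tree vertex. A bookkeeping argument shows that operations performed earlier on \emph{other} unstable reticulations never suppress $p_1,p_2$ or their tree-vertex second children (the only reticulation child of $p_i$ is $r$, and that is the one we are about to process), so the same structural picture still holds when $r$ is reached. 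A short case analysis on whether $p_1$'s second child or $p_2$'s second child coincides with the child of $r$ selects one of $(p_1,r),(p_2,r)$ whose deletion, after suppression, yields a simple binary phylogenetic network with the same leaf set.

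For the bounds, monotonicity implies that every reticulation stable in $N$ is never selected by the procedure (it remains stable throughout) and survives in $N'$ as a stable reticulation, giving $S_{ret}(N) \leq S_{ret}(N')$. Since the procedure removes exactly one reticulation per iteration and executes at most $U_{ret}(N)$ iterations, we have $|{\rm ret}(N')| \leq |{\rm ret}(N)| = S_{ret}(N) + U_{ret}(N)$, and the upper bound $S_{ret}(N') \leq S_{ret}(N) + U_{ret}(N)$ follows since $N'$ is reticulation-visible. The main obstacle is the bookkeeping: rigorously verifying that at every iteration the chosen unstable reticulation still has two stable tree-vertex parents with tree-vertex second children in the current network, and that the selected deletion always yields a valid binary phylogenetic network with the same leaves and no parallel branches.
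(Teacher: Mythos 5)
Your approach is essentially the paper's: both proofs eliminate each unstable reticulation by deleting one of its incoming branches and suppressing the resulting degree-two vertices, and both rest on the observation that deleting branches can only shrink the set of root-to-leaf paths and hence never destabilizes a surviving vertex (you make this monotonicity explicit; the paper uses it implicitly when asserting that $b$ and $d$ remain stable). The bookkeeping you flag as the main obstacle is precisely what the paper's local analysis supplies, and it does go through: the child $b$ of an unstable reticulation $a$ is itself a \emph{stable reticulation} (it is stable by near-stability, and if it were a stable tree vertex then \Cref{three_facts}(2) would make $a$ stable); both parents of $a$ are stable tree vertices whose other children are non-reticulations by \Cref{three_facts}(3); and the retained parent $f$ of $a$ is distinct from the other parent $g$ of $b$, since $f=g$ would give $f$ two reticulation children, making it an unstable parent of the unstable $a$ and violating near-stability --- so contracting $f$-$a$-$b$ never creates parallel branches. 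Your alternative of choosing which incoming branch to delete by checking whether a parent of $r$ already parents $b$ is also sound (at most one of the two parents can, as $b$ has indegree two), but the near-stability argument shows neither does, so in fact no choice is needed; with these facts in hand your outline closes into the paper's proof.
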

\begin{proof}
Let $a$ be an unstable reticulation in $N$, whose child is denoted by $b$. Since $N$ is nearly stable, 
$b$ is stable.  By \Cref{three_facts}(2),  $b$ is a stable reticulation.
 Let $c$ denote a parent of $a$; 
then $c$ is stable by definition of $N$, 
and it is a tree vertex by \Cref{three_facts}(2).
Let $d$ denote the other child of $c$. 
Since $c$ is stable, $d$ is a tree vertex (\Cref{three_facts}(3)). 
In addition,  $d$ is stable.

Assume on the contrary that $d$ is unstable. Then  both its children must be stable by the nearly-stable property of $N$. Hence,  by \Cref{three_facts}(2)  and the fact that $d$ is unstable, both its children are stable reticulations.  Since $a$ is unstable, $a$ is not a child of $d$.
This implies that $c$ is unstable, a contradiction.

Finally, let $e$ be the parent of $c$.  
$f$ be the other parent of $a$ 
and 
$g$ be the other parent of $b$ (see \Cref{Figure1}).
 Note that $g \neq f$.  Otherwise,  $f$ is unstable, contradicting that there are no two consecutive unstable vertices. 
To transform $N$ into a binary reticulation-visible 
network,  we remove unstable vertex $a$ by first removing the branch $(c, a)$, and then contracting  the paths $f$-$a$-$b$ and $e$-$c$-$d$ into branches $(f, b)$ and $(e, d)$. 
Both $b$ and $d$ are clearly still stable in the resulting network.  By rewiring around every unstable reticulation in $N$, we produce a binary reticulation-visible network $N'$. The inequality 
follows 
from the fact that no stable reticulation is removed, and no new reticulation is created during the rewiring.  \qed
\end{proof}

\begin{figure}	
\begin{center}
	\includegraphics[width=0.8\textwidth]{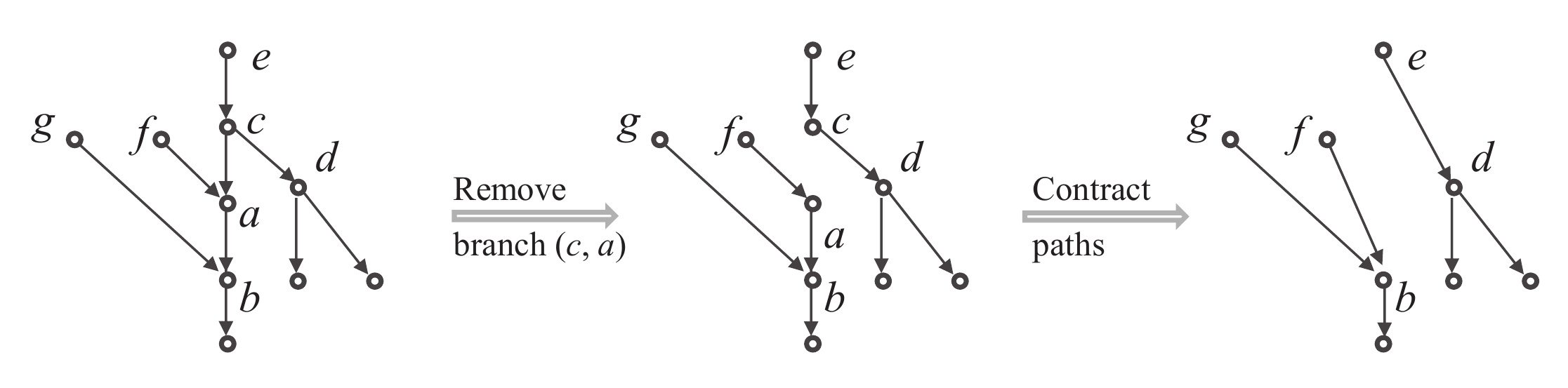}
\end{center}
\caption{ {\bf A}  An unstable reticulation $a$, its  stable child $b$ and its stable parents ($c$ and $f$) in the original network  $N$. To transform $N$ into a reticulation-visible network, 
we remove the incoming reticulation branch $(c, a)$ ({\bf B})  and then  contract paths $e$-$c$-$d$ and $f$-$a$-$b$ ({\bf C}). The rewiring eliminates the unstable reticulation vertex $a$.
\label{Figure1}}
\end{figure}

\begin{lemma}
\label{lemma:U-is-at-most-two-S} For a binary nearly stable 
network $N$, $U_{ret}(N) \leq 2 S_{ret}(N)$.
\end{lemma}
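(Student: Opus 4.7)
\noindent\textbf{Proof proposal for \Cref{lemma:U-is-at-most-two-S}.}
The plan is to exhibit an at-most-2-to-1 map from the set of unstable reticulations of $N$ into the set of stable reticulations, from which the inequality is immediate by a simple counting argument.

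First I would define the map $\phi$ that sends every unstable reticulation $a$ to its unique child $\phi(a)$. Since $N$ is binary, each reticulation has outdegree exactly one, so $\phi$ is well defined as a function from $U_{\rm ret}(N)$ to the vertex set of $N$. The next step is to check that $\phi(a)$ is always a stable reticulation. Let $b = \phi(a)$. Because $N$ is nearly stable and $a$ is itself unstable, every parent of $b$ (and in particular $a$) would have to be stable if $b$ were unstable; since $a$ is not stable, $b$ must be stable. Moreover, by \Cref{three_facts}(2), a reticulation is stable iff its unique child is a stable tree vertex, so the unstable reticulation $a$ cannot have a stable tree vertex as its child; combined with the fact that $b$ is stable, this forces $b$ to be a stable reticulation. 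Hence $\phi$ is a function $U_{\rm ret}(N) \to S_{\rm ret}(N)$.

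The second main step is to bound the size of the fibers of $\phi$. Fix a stable reticulation $b$ and consider $\phi^{-1}(b)$, the set of unstable reticulations $a$ such that $b$ is the child of $a$. Every such $a$ is a parent of $b$, and since $N$ is binary, $b$ has indegree exactly two. Therefore $|\phi^{-1}(b)| \leq 2$. Summing over $b \in S_{\rm ret}(N)$ yields
\[
 U_{\rm ret}(N) \;=\; \sum_{b \in S_{\rm ret}(N)} |\phi^{-1}(b)| \;\leq\; 2\, S_{\rm ret}(N),
\]
which is the claimed inequality.

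The only subtlety, and the step I would be most careful about, is justifying that the image of $\phi$ really sits inside the stable reticulations rather than only inside stable vertices in general. This is precisely where both the nearly-stable hypothesis and \Cref{three_facts}(2) are essential: nearly-stable forces $b$ to be stable, and the contrapositive of \Cref{three_facts}(2) rules out $b$ being a tree vertex. Once this is established, the counting argument is routine and no further case analysis of the structure around $a$ is needed.
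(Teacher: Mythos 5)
Your proposal is correct and is essentially the paper's own (one-sentence) argument, just written out: the child of an unstable reticulation is a stable reticulation, and each stable reticulation has only two parents, giving the 2-to-1 bound. The only detail worth tightening is that ``$b$ is stable and not a stable tree vertex'' leaves open the possibility that $b$ is a leaf, which you should rule out explicitly (if $b$ were a leaf, its unique parent $a$ would lie on every root-to-$b$ path and hence be stable, a contradiction).
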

\begin{proof} 
Directly follows 
from the fact that an unstable reticulation must have a stable reticulation as its child, and any stable reticulation can be the child of at most two unstable reticulations.
\qed
\end{proof}

\begin{theorem} 
\label{Bound_Thm}
Let $N$ be a binary nearly stable 
network with $n$ leaves. Let $T(N)$ denotes the number of tree vertices in $N$. Then:
\begin{enumerate}[label=(\roman*)]
\item $N$ has at most $12(n-1)$ reticulations;

\item $|T(N)|\leq 13(n-1)$ and $|E(N)|\leq 38(n-1)$.
\end{enumerate}
\end{theorem}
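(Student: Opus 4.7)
The plan is to combine the three prior results in sequence: Theorem~\ref{linear_bound} (reticulation-visible networks have at most $4(n-1)$ reticulations), Lemma~\ref{lemma:bin-NS-to-bin-RV} (nearly-stable to reticulation-visible transform), and Lemma~\ref{lemma:U-is-at-most-two-S} ($U_{ret} \leq 2 S_{ret}$). Together they should pin down the stable reticulations of $N$, which then control the unstable ones, which in turn yield the total vertex and edge counts via a routine degree-balance computation.

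For (i), I would first apply Lemma~\ref{lemma:bin-NS-to-bin-RV} to obtain a binary reticulation-visible network $N'$ on the same set of $n$ leaves with $S_{ret}(N) \leq S_{ret}(N')$. Then Theorem~\ref{linear_bound} applied to $N'$ gives $S_{ret}(N') \leq 4(n-1)$, hence $S_{ret}(N) \leq 4(n-1)$. Invoking Lemma~\ref{lemma:U-is-at-most-two-S} on $N$ yields $U_{ret}(N) \leq 2 S_{ret}(N) \leq 8(n-1)$, and summing both counts gives the total of at most $12(n-1)$ reticulations claimed in (i).

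For (ii), the argument is purely combinatorial. Writing $R$ for the number of reticulations and $|T(N)|$ for the number of tree vertices in the binary network $N$, each tree vertex contributes indegree $1$ and outdegree $2$, each reticulation contributes indegree $2$ and outdegree $1$, each leaf contributes indegree $1$ and outdegree $0$, and the root contributes indegree $0$ and outdegree $2$. Equating the sum of indegrees with the sum of outdegrees gives $|T(N)| + 2R + n = 2 + 2|T(N)| + R$, so $|T(N)| = R + n - 2$, and the total number of edges equals $3R + 2n - 2$. Substituting the bound $R \leq 12(n-1)$ from (i) yields $|T(N)| \leq 13n - 14 \leq 13(n-1)$ and $|E(N)| \leq 36(n-1) + 2n - 2 = 38(n-1)$.

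I do not expect a real obstacle here: the work has all been done in the preceding lemmas, and part (ii) is just an accounting exercise. The only point that requires mild care is making sure that the transformation of Lemma~\ref{lemma:bin-NS-to-bin-RV} preserves the leaf set (so that the $n$ in Theorem~\ref{linear_bound} applied to $N'$ is the same $n$) and that the bound transfers in the right direction, namely $S_{ret}(N) \leq S_{ret}(N')$ rather than the reverse; both are explicitly provided by the lemma statement.
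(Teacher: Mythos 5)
Your proposal is correct and follows essentially the same route as the paper: part (i) chains Lemma~\ref{lemma:bin-NS-to-bin-RV}, Theorem~\ref{linear_bound} applied to $N'$, and Lemma~\ref{lemma:U-is-at-most-two-S} in exactly the same way, and part (ii) is the same degree/flow accounting (your explicit degree-sum computation $|T(N)| = R + n - 2$ is in fact slightly more careful than the paper's statement that the tree-vertex count is $n-1$ plus the number of reticulations, which implicitly counts the root). No gaps.
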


\begin{proof} 
(i)  
\Cref{linear_bound} and \Cref{lemma:bin-NS-to-bin-RV,lemma:U-is-at-most-two-S} 
imply $S_{ret}(N) + U_{ret}(N) \leq 3 S_{ret}(N) \leq 3 S_{ret}(N') \leq 3(4n-4)=12(n-1)$.

\noindent (ii)   We can think of the network as a flow network, 
 with  $\netroot{N}$ as source and the $n$ leaves as  sinks. Hence,  the number of tree vertices equals $n-1$ plus the number of reticulations, that is,  at most $13(n-1)$ (by (i)). Since the outdegree of the root is two, and the outdegrees of each tree and reticulation  vertex are  2 and 1, respectively,  $N$ has $2(13n-13)+12(n-1)=38(n-1)$ branches at most. \qed
 \end{proof}
\section{A Quadratic-Time Algorithm for the TCP}

In this section, we shall present a quadratic-time algorithm for solving the TCP.  If a given 
network  $N$ and a given reference 
tree $T$ contain a common subphylogeny, then we can simplify the task of determining whether $N$ displays $T$ by 
replacing the common subphylogeny by a new leaf.  Therefore, without loss of generality, we assume that $N$  does not contain a subphylogeny with two or more leaves. We call this property  the {\it subphylogeny-free property}.

\begin{figure}[!th]	
\begin{center}
	\includegraphics[width=0.9\textwidth]{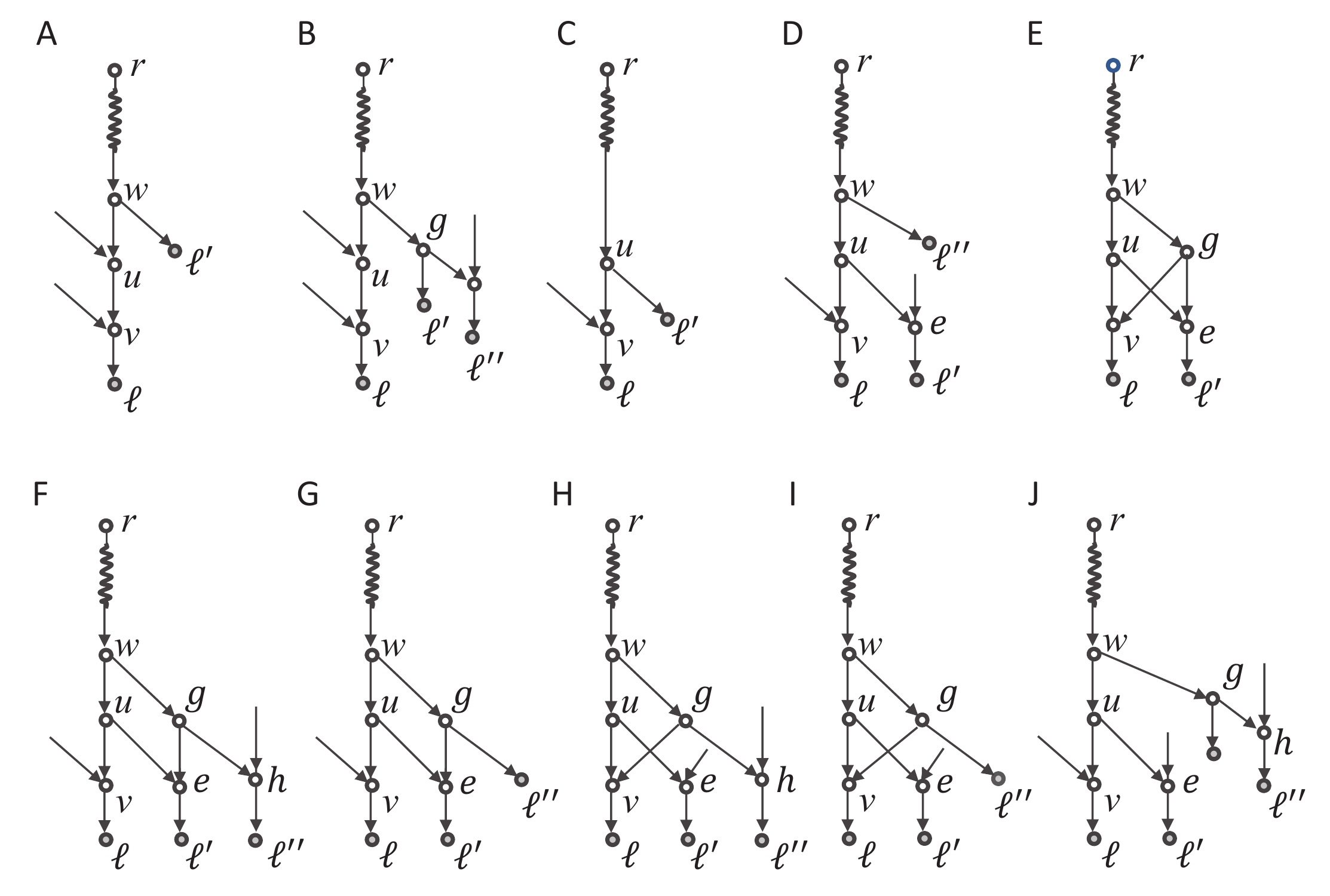} 
\end{center}
\caption{ All ten possible subnetworks at the end  of a longest path in a  nearly stable network. Here, $r$ is the network root and the directed path from $r$ to  $w$ is represented by a coiled path. The parent $w$ of $u$ is not shown in {\bf C}.}
\label{5_cases}
\end{figure}

\begin{lemma}
\label{lemma_tcp}
 Let $N$ be a nearly stable phylogenetic network satisfying the  subphylogeny-free property. Let 
$P=( r, \ldots, w,  u, v, \ell)$
be a longest root-to-leaf path of four or more vertices in $N$, where $r=r_N$ and $\ell$  the leaf end.  Then 
the subnetwork consisting of the descendants of $w$
exhibits one of the structures given in \Cref{5_cases}.
\end{lemma}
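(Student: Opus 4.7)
The plan is to proceed by case analysis on the type of $u$ and on the identity and type of the second parent $u'$ of $v$. The three tools I would use throughout are: (i) the longest-path property of $P$, which bounds the length of any descending path from a vertex branching off $P$; (ii) the subphylogeny-free assumption, which forbids any internal vertex with two leaf children (a cherry); and (iii) the nearly-stable property combined with \Cref{three_facts}, which restricts which vertex types can appear in each position.

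First I would show that $v$ must be a reticulation with $\ell$ as its unique child. Indeed, if $v$ were a tree vertex with second child $x$, the longest-path assumption forces every descending path from $x$ to have length $0$, so $x$ is a leaf; but then $v$ roots a cherry, contradicting the subphylogeny-free property. Hence $v$ is a reticulation, and in particular has a second parent $u' \neq u$.

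Next I would split into two main cases according to the type of $u$. If $u$ is a tree vertex, let $y$ denote its second child; by the longest-path bound $y$ is either a leaf or a reticulation whose unique child is a leaf (a tree vertex as second child would force either a cherry or an extension of $P$). If $u$ is a reticulation, its only child is $v$ and it has a second parent $w' \neq w$, so I would analyse the structure above $u$ instead. Within each of these cases, I would further sub-divide based on whether $u' = w$ and, when $u' \neq w$, on the type of $u'$ and its second child (or, if $u'$ is a reticulation, its second parent). For each resulting configuration, \Cref{three_facts}(3) rules out any stable tree vertex with two reticulation children, while the nearly-stable hypothesis forces unstable vertices to have stable parents, pinning down the admissible structures.

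The main obstacle will be the bookkeeping of the enumeration: I would need to verify carefully that no admissible configuration is missed, that no impossible configuration is included, and that distinct sub-cases do not correspond to the same subnetwork. Handling possible coincidences among $u'$, $w$, $w'$, $y$, and other exposed vertices requires particular care, as does applying the nearly-stable property consistently around each unstable vertex. Once each configuration is either matched to one of the ten diagrams in \Cref{5_cases} or eliminated by one of the three properties, the classification is complete.
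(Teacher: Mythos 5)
Your opening moves coincide with the paper's: $v$ is forced to be a reticulation by exactly the cherry argument you give, the top-level split is on whether $u$ is a reticulation or a tree vertex, and the second child of a tree-vertex $u$ is pinned to be a leaf or a reticulation with a leaf child by the longest-path bound. The three tools you list are also the right ones. The genuine gap is in the pivot you choose for the sub-case enumeration. The lemma asks for the complete structure of $R(w)$, and since $w$ has two children $u$ and $g$, any complete classification must also determine $R(g)$; in fact most of the ten diagrams (B and D through J) differ precisely in what $g$ is (leaf or tree vertex) and in whether $g$'s reticulation child, if it has one, coincides with $v$, with $e$, with both, or with neither. Your plan never examines $g$. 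Casing on the second parent $u'$ of $v$ (and on $w'$, the second parent of $u$) does not substitute for this: in most of the target configurations the second parent of $v$ (and of $e$) lies \emph{outside} $R(w)$ altogether --- it is a dangling branch into the subnetwork --- so your sub-division by ``whether $u'=w$ and otherwise the type of $u'$'' degenerates to a single uninformative case while $R(g)$ remains unexplored. As described, your enumeration could never produce, say, configuration J, where $g$ is a tree vertex whose reticulation child is unrelated to $v$ and $e$.

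The missing step that makes the enumeration over $g$ both necessary and tractable is the following chain, which the paper makes explicit: whenever $u$ is a reticulation (so its unique child $v$ is a reticulation) or a tree vertex with a second reticulation child $e$, \Cref{three_facts}(2)--(3) force $u$ to be unstable; the nearly-stable property then forces its parent $w$ to be stable, hence a stable tree vertex; and \Cref{three_facts}(3) applied to $w$ then constrains its other child $g$ (in particular $g$ cannot be a reticulation when $u$ is). The case analysis is then driven by $g$ and by which of the exposed reticulations $v$ and $e$ receive their second incoming branch from within $R(g)$. You cite the nearly-stable property as a generic tool, but without deploying it in this specific way the object $g$ never enters your bookkeeping, and the classification cannot be completed along the route you describe.
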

\begin{proof} 
Note that $v$ cannot be a tree vertex: since $P$ is a longest root-to-leaf path, the other child of $v$ would otherwise be a leaf, thereby contradicting our assumption that $N$ satisfies the  subphylogeny-free property. Therefore, $v$ is a reticulation. There are  two possible cases for $u$.

\begin{enumerate}
\item 
{\bf The $u$ is a reticulation:} 
Then $u$ is unstable, and $w$  must be a stable tree vertex 
(see \Cref{three_facts}(2) for both claims), 
which is stable on  $\ell$ or some other leaf.  Let $g$ be the other child of $w$. By  \Cref{three_facts}(3),
$g$ is either a tree vertex or a leaf.  If $g$ is a leaf, we obtain the subnetwork in \Cref{5_cases}A. 
If $g$ is a tree vertex, then neither of its children is a tree vertex: since $P$ is a longest path, a tree vertex child of $g$ would have two leaves as children, thereby contradicting the subphylogeny-free property. Note that $g$'s children cannot both be reticulations either, since otherwise $w$ would be unstable. Therefore, 
one child of $g$ is a leaf and the other is a reticulation with a leaf child (again because $P$ is a longest path),  
as shown in \Cref{5_cases}B.

\item {\bf The $u$ is a tree vertex:} Let $e$ denote  the other child of $u$. 
Note that $e$ cannot be a tree vertex, otherwise both its children would be leaves (since $P$ is a longest path), which would contradict our assumption that $N$ has the subphylogeny-free property. 
 If $e$ is a leaf, we obtain the subnetwork shown in  \Cref{5_cases}C. 
If  $e$ is a reticulation,  then its only child is  a leaf (again because $P$ is a longest path), so $e$ is stable on that leaf and $u$ is therefore unstable.  Since $N$ is nearly stable,  $w$ must be  a  stable tree vertex. We consider the other child $g$ of $w$ in the following subcases.

\begin{enumerate}[label=(2.\arabic*),leftmargin=.35in]
\item If $g$ is a leaf, then we have the subnetwork given in \Cref{5_cases}D.

\item If $g$ is a tree vertex  and  also a parent of $e$ and $v$, then we obtain the subnetwork in  \Cref{5_cases}E. 

\item If $g$ is a tree vertex and in addition, $g$ is a parent of $e$, but not a parent of  $v$: then $w$ is  stable on $\ell'$, the unique child of $e$.  Let $h$ be the other child of $g$; 
then $h$ cannot be a tree vertex,  since both its children would then be leaves, which would contradict our assumption that $N$ has the subphylogeny-free property.
 If $h$ is a reticulation, its child must be a leaf, since $P$ is a longest path. Thus,   we have the subnetwork given in \Cref{5_cases}F. 
If $h$ is a leaf,  we obtain the 
subnetwork in  \Cref{5_cases}G.

\item If $g$ is a tree vertex and in addition, $g$ is a parent of $v$, but not a parent of  $e$, then a discussion similar to that  of case (2.2) characterises the only two possible subnetworks (\Cref{5_cases}H and \ref{5_cases}I) in this case.

\item If  $g$ is a tree vertex and in addition, $g$ is neither  a parent of  $v$ nor a parent of $e$: then again we look at $g$'s children. Both cannot be reticulations, otherwise $w$ is unstable, a contradiction.   If neither of them is a reticulation,  then there is a subtree below $g$; 
if one of them  is a reticulation and the other is a tree vertex,  then again there is a subtree. The only possible case that remains, shown in \Cref{5_cases}J, is the case where one child is a reticulation and the other is a leaf.

\item If $g$ is a reticulation: Then $w$  unstable. This is impossible, as $w$ is a stable tree vertex. \qed
\end{enumerate}
\end{enumerate}
\end{proof}


The  subnetwork below $g$ of the structures shown in \Cref{5_cases}B, \ref{5_cases}G, \ref{5_cases}I, \ref{5_cases}J  and that below $u$ in \Cref{5_cases}C match  the following pattern:
\begin{center}
     \includegraphics[scale=0.7]{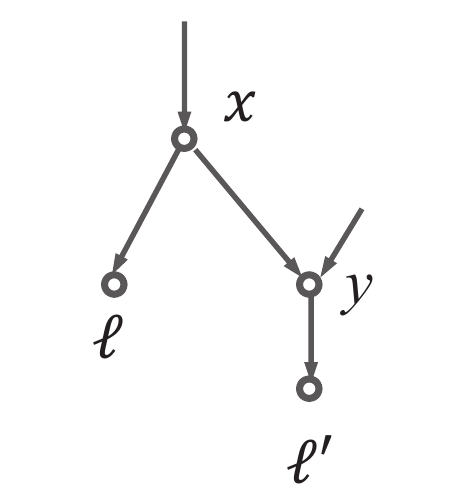} 
\end{center}
in which a leaf $\ell$  has a reticulation sibling $y$   and  a leaf nephew, $\ell'$. Such a pattern is called an {\it uncle-nephew structure}.
Note that  if $\ell$ and $\ell'$ are not siblings in a tree  displayed by $N$, then the reticulation branch $(x, y)$ 
should not be used. If $\ell$ and $\ell'$ are siblings, either $(x, y)$ or the  other  branch entering $y$  can be used.
Here, since the other branch enters $y$ from an unspecified vertex, it is simply called a \emph{dangling branch}.
It is not hard to see that  for a tree $T$ in which  $\ell$ and $\ell'$ are siblings, if $T$ is displayed in the network resulting from the  removal of  $(x, y)$, it  is also displayed in
the one after  the dangling branch  is removed.   Hence, to determine whether $N$ displays a tree $T$, we can simplify the network by eliminating $y$ using the following process:
\begin{quote}
   {\bf Uncle-Nephew Reduction}  In an uncle-nephew substructure shown above, remove the dangling branch   if $\ell$ and $\ell'$ are siblings in $T$, or remove $(x, y)$ otherwise. Then contract vertices with indegree and outdegree 1. 
\end{quote}

In each of the other cases, we can also simplify the network by using information on the input tree. 
 To summarize  how to simplify the network,  we use the following notation for each vertex $w$ in a network $N$:
    \begin{itemize}
      \item  $R(w)$ denotes the subnetwork consisting of all the descendants of $w$;
      \item $ (-, x)$ denotes  the dangling branch entering  $x$ from its parent not in $R(w)$ for $x$ in $R(w)$;
      \item $N'+(x, y)$ denotes the subnetwork obtained by adding $(x, y)$ into $N'$ for a subnetwork $N'$ of $N$ and a branch $(x, y)$ of $N$;
  \item $N'-(x, y)$ denotes the subnetwork obtained by removing $(x, y)$ from $N'$ for a subnetwork $N'$ of $N$;
  \item $p_{T}(x)$ denotes the parent of 
  a vertex $x$ in  a tree $T$.
    \end{itemize}

\begin{theorem}
\label{validation}
Let $N$ be a binary nearly stable network with no uncle-nephew structure, and $T$ a tree with the same set of labeled leaves. Let $w$ be a tree vertex in $N$.
Define $N'$ as follows. 
\begin{enumerate}
\item[{\rm (i)}] When $R(w)$ 
matches the structure of 
\Cref{5_cases}A,  define $N'=N-(w, u)$ if $\ell$ and $\ell'$ are not 
sibling in $T$ and $N'=N-\{(-, u), (-, v)\}$ otherwise. 
\item[{\rm (ii)}]  When $R(w)$
matches the structure of  \Cref{5_cases}D, define 
$N'=N-(-, v)$ when $\ell$ and $\ell''$  are siblings, or when $\ell$ and $\ell'$ are  siblings and their parent is a sibling of $\ell''$  in $T$, and 
 $N'=N-(u, v)$ otherwise.
\item[{\rm (iii)}] When $R(w)$ 
matches the structure of 
\Cref{5_cases}E,
define $N'=N-\{(u, e), (g, v)\}$.
\item[{\rm (iv)}] When $R(w)$ 
matches the structure of 
\Cref{5_cases}F,
                   define $N'=N-\{ (g, e), (-, v)\}$ if $\ell$ and $\ell'$ are siblings in $T$ and 
                  $N'=N-(u, e)$ otherwise. 
\item[{\rm (v)}] When $R(w)$ 
matches the structure of 
\Cref{5_cases}H,
           define $N'=N-\{(g, v),(-,e)\}$ if $\ell$ and $\ell'$ are siblings in $T$ and $N'=N-(u, v)$ otherwise.
\end{enumerate}
Then $N'$ is nearly stable and  $N$ displays $T$  only if $N'$ displays $T$.
\end{theorem}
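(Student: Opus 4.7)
The plan is to handle each of the five structures (A, D, E, F, H) in \Cref{5_cases} in turn, and in each case split the argument into (P1) display preservation ($N$ displays $T$ implies $N'$ displays $T$) and (P2) near-stability preservation of $N'$.

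For (P1), assume $N$ displays $T$ via a spanning subdivision $T^{*}$, obtained by deleting one incoming branch at each reticulation of $N$. I would show that, possibly after modifying $T^{*}$ locally by re-choosing the incoming branch at some reticulations in $R(w)$ (without changing the abstract tree produced by contraction), $T^{*}$ can be made to avoid every branch removed in the definition of $N'$; then $T^{*}$ is still a spanning subdivision of $T$ inside $N'$. The choices at reticulations inside $R(w)$ are constrained, and frequently forced, by the sibling relationships among $\ell$, $\ell'$, and (when present) $\ell''$ in $T$. For example, in structure A (case (i)), if $\ell$ and $\ell'$ are not siblings in $T$, then $(w,u)$ cannot lie in $T^{*}$: otherwise $u$ and $v$ both have indegree and outdegree $1$ in $T^{*}$, contract, and $\ell$ and $\ell'$ become siblings under $w$ in $T$. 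Conversely, if $\ell$ and $\ell'$ are siblings in $T$, then $(w,u)$ must lie in $T^{*}$, so the two dangling branches $(-,u)$ and $(-,v)$ do not, and can be removed. The arguments for cases (ii)--(v) follow the same template: enumerate which branches of $R(w)$ are forced into or out of $T^{*}$ by each admissible sibling pattern.

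For (P2), removing branches (followed by implicit contraction of any degree-$2$ vertices created) is a local modification: only vertices inside $R(w)$, adjacent to $R(w)$, or at the other endpoint of a removed dangling branch can change stability status. Using \Cref{three_facts}, I would verify that every vertex stable in $N$ still has a witnessing leaf path in $N'$ (the leaf set is preserved, and paths in $N$ using a removed branch can be rerouted through the retained incoming branch of the same reticulation), and that every vertex that becomes unstable in $N'$ has both of its $N'$-parents stable, which is exactly the near-stability requirement.

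The main obstacle is (P2) in the sub-cases where a dangling branch $(-,u)$ or $(-,v)$ is removed: the other endpoint of that branch becomes degree-$2$ and triggers a contraction whose stability consequences may propagate upward, potentially beyond $R(w)$. Controlling this propagation requires combining the nearly-stable hypothesis on $N$ with the subphylogeny-free and uncle-nephew-free assumptions, both of which constrain what can lie above and beside $R(w)$. In (P1), the trickiest case is (ii): three leaves $\ell$, $\ell'$, $\ell''$ interact through two interlocking sibling conditions (whether $\ell$ and $\ell''$ are siblings, and whether $\ell$ and $\ell'$ are siblings with their parent a sibling of $\ell''$), which forces a careful sub-case enumeration to pick the correct branch to remove in each configuration.
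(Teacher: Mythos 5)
Your two-part plan (modify the spanning subdivision by re-choosing which incoming branch is kept at the reticulations of $R(w)$ so that it avoids the deleted branches, and argue near-stability by noting the operations remove no leaf and only restrict root-to-leaf paths) is exactly the paper's proof, which likewise works out only structure A in full detail and appeals to symmetry for D, F, H and to triviality for E. One caveat: your illustrative claim for case (i) --- that $\ell$ and $\ell'$ being non-siblings forces $(w,u)\notin T^{*}$ --- overlooks the sub-case where $(w,u)\in T^{*}$ but $(u,v)\notin T^{*}$, so that $u$ is a dummy leaf rather than an internal degree-2 vertex; the paper handles this by relocating the dummy leaf via $T^{*}+(-,u)-(w,u)$, which is precisely the kind of local re-choice your framework already permits.
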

\begin{proof}
Since none of the simplifications removes any leaf and all of them only reduce possible paths from $\netroot{N}$ to a leaf, the resulting network $N'$ is nearly stable.

 Assume $R(w)$ is the subnetwork in \Cref{5_cases}A and $N$ displays $T$.  Then there exists a subtree $T'$ of $N$ that is a subdivision of $T$ and let  $p_T(\ell)$ corresponds  $x$ in $T'$. 
Clearly, $x$ is of degree 3 and hence a tree vertex in $N$. 
 We consider two cases.

CASE A. Leaves $\ell$ and $\ell'$ are not siblings in $T$.

We first have that $x\neq u$,  $x\neq v$ for $u$ and $v$  in  \Cref{5_cases}A.   We also have $x\neq w$. Otherwise, 
$\ell'$ must be a child of $x$ in $T'$ and $\ell$ is a sibling of  $\ell'$  in $T$, a contradiction. 
Therefore,  the path from $x$ to $\ell$ contains two or more vertices and $v$ is the parent of $\ell$ in this path.
If  $u$ is the parent of $v$ in the same path, neither  $(-, v)$ nor $(w, u)$ is in $T'$, 
indicating that $N'= N-(w, u)$  also displays $T$.  

If $p_{T'}(v)\neq u$ in the same path,  then $(u, v)$ is not in $T'$ and hence $u$ becomes a dummy leaf in $T'$,  as there is no leaf other than $\ell$ below $u$ in $R(w)$. If $(w, u)$ is in $T'$,  then $(-, u)$ is not in $T'$ and 
$T'+(-, u)-(w, u)$ is a subtree of $N'$ in which only the dummy leaf $u$ is relocated.  Hence, $N'$ also displays $T$.

CASE B.  Leaves $\ell$ and $\ell'$ are siblings in $T$. 

Then $x$ is a common ancestor of $\ell$ and $\ell'$ in $N$. If $x=w$,  the path from $x$ to $\ell$ in $T'$ must be $w, u, v$, 
as this is only path from $w$ to $\ell$ in $N$.  Hence, $(-, u)$ and $(-, v)$ are not in $T'$. Therefore,  $T'$ is a subtree of $N'$ and  $N'$ also displays $T$.  

If $x\neq w$, then $x$ is an ancestor of $w$ and hence $w$ is the parent of $\ell'$ in the path from $x$ to $\ell'$ in $T'$.
  Note that $p_{T'}(\ell)=v$. If  $p_{T'}(v)=u$, then 
 $(-, u)$ is in $T'$, but 
 both $(-, v)$ and $(w, u)$ are not.    $T''=T'+(w, u)-(-, u)$ is  a subtree  of $N'$.
Noting  that $T''$ is also  a  subdivision of $T$,    $N'$ displays $T$.

If  $p_{T'}(v)\neq u$,  then $(-,  v)$ is in the path from $x$ to $\ell$ in $T'$. This implies that 
$(u, v)$ is not in $T'$ and $u$ is a dead-end in $T'$. If $(w, u)$ is in $T'$,  the subtree  $T''=T'+(u, v)-(-, v)$ of $N'$  is  a subdivision of $T$.  If $(w, u)$ is not in $T'$,  the subtree  $T''=T'+(w, u)-(-, u)-(-, v)$ of $N'$  is a subtree of $N'$. Hence, $N'$ displays $T'$. 

Similarly, we can prove that $N$ displays $T$ only if $N'$ displays $T$ when $R(w)$ is the subnetwork in 
the panels D, F, and H in \Cref{5_cases}. Note also that the subnetworks in the panels $F$ and $H$ are essentially identical (if the positions of $v$ and $e$ are switched).   Due to the limited space, the details are omitted here.
The case when $R(w)$ is the subnetwork in \Cref{5_cases}E is trivial, as 
deletion of which two reticulation branches from $v$ and $e$ does not affect outcome. 
\qed
\end{proof}

By \Cref{validation}, we are able to  determine whether a nearly stable phylogenetic network $N$ displays a binary tree  $T$ 
or not by repeatedly executing the following  tasks in turn until the resulting network $N'$ becomes a tree:
  \begin{itemize}
    \item Compute a longest path $P$ in $N'=N$; 
    \item Simplify $N'$  by considering the subnetwork at the end of $P$ according to the cases in \Cref{lemma_tcp}; 
    \item Contract degenerated reticulations in $N'$ and replace the parent of a pair of leaves appearing in both $N'$ and $T$ with a new leaf. 
\end{itemize}
 and then check if $N'$ is identical to $T$. 

Finally, we analyze the time complexity. Let $N$ and $T$ have $n$ leaves. By \Cref{Bound_Thm}, there are $O(n)$ vertices and 
$O(n)$ branches in $N$. Since we eliminate at least a  reticulation in each loop step, the algorithm stops after $O(n)$ loop steps.  In each loop step, a longest path can be computed in $O(n)$ time (\cite{Sedgewick_book}, page 661), as $N$ is acyclic; both the second and third tasks can be done in 
constant time. In summary,  our algorithm has quadratic time complexity. 
 
\section{Conclusion}

We have developed a quadratic-time algorithm for the TCP for binary nearly stable phylogenetic networks.  Our algorithm not only is  applicable  to a  superclass of tree-child networks, but also has a lower time complexity than the algorithm reported in \cite{van_Iersel_2010_IPL}.  Although  phylogenetic network models built in the study of viral and plant evolution are often nearly stable, it is interesting to know whether the TCP is polynomial time solvable or not for networks with other weak properties.

In particular, the problem remains open for binary networks with the
visibility property, but the upper bound we have presented  on the number of 
reticulation vertices of such networks, as well as our algorithm
for nearly stable phylogenetic networks,   provide definitely valuable ideas to
solve the problem, exactly or heuristically, on phylogenetic
networks with the reticulation visibility property.

\section{Acknowledgments}

The project was financially supported by Merlion Programme 2013.


\end{document}